\DeclareMathOperator{\adj}{adj}
\newcommand{\ints}{\mathbb{Z}}
\newcommand{\defeq}{\;\genfrac{}{}{0pt}{2}{\text{def}}{=}\;}
\begin{document}

\title{HalftimeHash: Modern Hashing without 64-bit Multipliers or Finite Fields}
\author{Jim Apple
\orcidID{0000-0002-8685-9451}}
\institute{  \email{\href{mailto:jbapple@apache.org}{jbapple@apache.org}}}
\maketitle

\begin{abstract}
HalftimeHash is a new algorithm for hashing long strings.
The goals are few collisions (different inputs that produce identical output hash values) and high performance.

Compared to the fastest universal hash functions on long strings (clhash and UMASH) HalftimeHash decreases collision probability while also increasing performance by over 50\%, exceeding 16 bytes per cycle.

In addition, HalftimeHash does not use any widening 64-bit multiplications or any finite field arithmetic that could limit its portability.

\keywords{Universal hashing \and Randomized algorithms}
\end{abstract}

\section{Introduction}
A hash family is a map from a set of seeds $S$ and a domain $D$ to a codomain $C$.
A hash family $H$ is called is $\varepsilon$-almost universal (``$\varepsilon$-AU'' or just ``AU'') when
\[
\forall x,y \in D, x \neq y \implies \mathrm{Pr}_{s \in S}[H(s, x) = H(s, y)] \leq \varepsilon \in o(1)
\]
The intuition behind this definition is that collisions can be made unlikely by picking randomly from a hash {\em family} independent of the input strings, rather than anchoring on a specific hash {\em function} such as MD5 that does not take a seed as an input. AU hash families are useful in hash tables, where collisions slow down operations and, in extreme cases, can turn linear algorithms into quadratic ones. \cite{tabulation,rust-quadratic,impala-quadratic,algorithm-attack}

HalftimeHash is a new ``universe collapsing'' hash family, designed to hash long strings into short ones. \cite{linear-hash-functions,hashing-without-primes-revisited,cuckoo-journal}
This differs from short-input families like SipHash or tabulation hashing, which are suitable for hashing short strings to a codomain of 64 bits. \cite{siphash,tabulation}
Universe collapsing families are especially useful for composition with short-input families: when $n$ long strings are to be handled by a hash-based algorithm, a universe-collapsing family that reduces them to hash values of length $c \lg n$ bits for some suitable $c > 2$ produces zero collisions with probability $1-O(n^{2-c})$.
A short-input hash family can then treat the hashed values as if they were the original input values. \cite{universe-collapse-linear-probing,siphash,tabulation,simple-hash-functions-work}
This technique applies not only to hash tables, but also to message-authentication codes, load balancing in distributed systems, privacy amplification, randomized geometric algorithms, Bloom filters, and randomness extractors. \cite{poly1305,chord,privacy-amplification,random-closest-pair,simple-hash-functions-work,fuzzy-extractors}

On strings longer than 1KB, HalftimeHash is typically 55\% faster than clhash, the AU hash family that comes closest in performance.

HalftimeHash also has tunable output length and low probabilities of collision for applications that require them, such as one-time authentication.\cite{nacl}
The codomain has size 16, 24, 32, or 40 bytes, and $\varepsilon$ varies depending on the codomain (see Figure \ref{frontier} and Section \ref{performance}).

\subsection{Portability}

In addition to high speed on long strings, HalftimeHash is designed for a simple implementation that is easily portable between programming languages and machine ISA's.
HalftimeHash uses less than 1200 lines of code in C++ and can take advantage of vector ISA extensions, including AVX-512, AVX2, SSE, and NEON.\footnote{\url{https://github.com/jbapple/HalftimeHash}}

Additionally, no multiplications from $\ints_{2^{64}} \times \ints_{2^{64}}$ to $\ints_{2^{128}}$ are needed.
This is in support of two portability goals -- the first is portability to platforms or programming languages without native widening unsigned 64-bit multiplications.
Languages like Java, Python, and Swift can do these long multiplications, but not without calling out to C or slipping into arbitrary-precision-integer code.
The other reason HalftimeHash avoids 64-bit multiplications is portability to SIMD ISA extensions, which generally do not contain widening 64-bit multiplication.
% TODO: check this

%% The x86-64 ISA extensions SSE2, AVX2, and AVX-512F all contain instructions to simultaneously multiply multiple pairs of 32-bit words, producing multiple 64-bit values.
%% Aarch64 has similar instructions in the NEON set.
% The POWER ISA also contains this, but only on 128 bits at once?

\subsection{Prior almost-universal families}

There are a number of fast hash algorithms that run at rates exceeding 8 bytes per cycle on modern x86-64 processors, including Fast Positive Hash, falk\-hash, xxh, Meow\-Hash, and UMASH, and cl\-hash. \cite{smhasher}
Of these, only cl\-hash and U\-MASH include claims of being AU; each of these uses finite fields and the x86-64 instruction for carryless (polynomial) multiplication.
% TODO: GHASH, Badger, Poly1305, ...
%Some similar previous work on AU hashing long strings is no longer as competitive, performance-wise, as it once was, including UHASH and VHASH. \cite{clhash,umac,vmac}

Rather than tree hashing, hash families like clhash and UMASH use polynomial hashing (based on Horner's method) to hash variable-length strings down to fixed-size output.
That approach requires 64-bit multiplication and also reduction modulo a prime (in $\ints$ or in $\ints_2[x]$), limiting its usability in SIMD ISA extensions.

\subsection{Outline}

The rest of this paper is organized as follows: Section \ref{prior-work} covers prior work that HalftimeHash builds upon.
Section \ref{gehc} introduces a new generalization of Nandi's ``Encode, Hash, Combine'' algorithm.\cite{ehc-nandi}
Section \ref{implementation} discusses specific implementation choices in HalftimeHash to increase performance.
Section \ref{performance} analyzes and tests HalftimeHash's performance.

\section{Notations and Conventions}

Input string length $n$ is measured in 32-bit {\em words}.
``32-bit multiplication'' means multiplying two unsigned 32-bit words and producing a single 64-bit word.
``64-bit multiplication'' similarly refers to the operation producing a 128-bit product.
All machine integers are unsigned.

Sequences are denoted by angled brackets: ``$\langle$'', ``$\rangle$'', and $\triangleleft$ prepends a character onto a string.
Subscripts indicate a numbered component of a sequence, starting at 0.
Contiguous half-open subsequences are denoted ''$x[y,z)$'', meaning $\langle x_y, x_{y+1}, \dots, x_{z-1} \rangle$.

$\bot$ is a new symbol not otherwise in the alphabet of words

$\varepsilon$ is called the {\em collision probability} of $H$; it is inversely related to $H$'s {\em output entropy}, $-\lg \varepsilon$.
The seed is sometimes referred as {\em input entropy}, which is distinguished from the output entropy both because it is an explicit part of the input and because it is measured in words or bytes, not bits.

Each step of HalftimeHash applies various transforms to groups of input values.
These groups are called {\em instances}.
The processing of a transform on a single instance is called an {\em execution}.

Instances are logically contiguous but physically strided, for the purpose of simplifying SIMD processing.
A physically contiguous set between two items in a single instance is called a {\em block}; the number of words in a block is called the {\em block size}.
Because instances are logically contiguous, when possible, the analysis will elide references to the block size.

Tree hashing examples use a hash family parameter $H$ that takes two words as input, but this can be easily extended to hash functions taking more than two words of input, in much the same way that binary trees are a special case of B-trees.

HalftimeHash produces output that is collision resistant among strings of the same length.
Adding collision resistance between strings of {\em different} lengths to such a hash family requires only appending the length at the end of the output.
%This turns, for instance, a hash family that produces 24 bytes of output into a hash family that produces 32 bytes of output.

%The main portion of the text describes a particular instantiation of HalftimeHash that produces 24 bytes of output.
%This will be generalized in Section~\ref{performance}.
Variants will be specified by their number of output bytes: HalftimeHash16, HalftimeHash24, HalftimeHash32, or HalftimeHash40.
%HalftimeHash24 has $\varepsilon < 2^{-83}$, as discussed in Section~\ref{performance}.

Except where otherwise mentioned, all benchmarks were run on an Intel i7-7800x (a Skylake chip that supports AVX512), running Ubuntu 18.04, with clang++ 11.0.1.

\section{Prior work}
\label{prior-work}

%% Roughly, each execution HalftimeHash can be thought of as a tree with a hashing primative ``NH'' at the nodes and instances of ``EHC'' at the leaves.
%% The prior work on each of these components is described in this section.

This section reviews hashing constructions that form components of HalftimeHash.
In order to put these in context, a broad outline of HalftimeHash is in order.

HalftimeHash can be thought of as a tree-based, recursively-defined hash function.
The leaves of the tree are the words of the unhashed input; the root is the output value.
Every internal node has multiple inputs and a single output, corresponding with the child and parent nodes in the tree.

To a first approximation, a string is hashed by breaking it up into some number of contiguous parts, hashing each part, then combining those hash values.
When the size of the input is low enough, rather than recurse, a construction called ``Encode, Hash, Combine'' (or ``EHC'') is used to hash the input.

%% The overal structure of HalftimeHash is from a tree-like hash invented by Carter and Wegman;
%% the internal nodes use NH, a hashing primitive from UHASH;
%% the EHC at the leaves is described in this section (and generalized for use in HalftimeHash in Section~\ref{gehc}).

\subsection{Tree hash}

HalftimeHash's structure is based on a tree-like hash as described by Carter and Wegman. \cite[Section 3]{carter-wegman-79}
To hash a string, we use $\lceil \lg n \rceil$ randomly-selected keys $k_i$ and a hash family $H$ that hashes two words down to one.
%Now for any $n \in \nats$ let $\lfloor\!\lfloor n \rfloor\!\rfloor$ denote the largest power of $2$ that is less than $n$.
Then the tree hash $T$ of a string $s[0,n)$ is defined recursively as:
\begin{equation}
\label{algebraic-badger}
\begin{array}{l}
T(k, \langle x \rangle) \defeq x \\
T(k, s[0,n)) \defeq H
(k_{  \lceil \lg n - 1 \rceil},
T(k, s[0, 2^{\lceil \lg n - 1 \rceil}) ),
T(k, s[ 2^{\lceil \lg n - 1 \rceil}, n) ))
\end{array}
\end{equation}

Carter and Wegman show that if $H$ is $\varepsilon$-AU, $T$ is $m\varepsilon$-AU for input that has length exactly $2^m$.
Later, Boesgaard et al. extended this proof to strings with lengths that are not a power of two.\cite{badger}

\subsection{NH}

In HalftimeHash, NH, an almost-universal hash family, is used at the nodes of tree hash to hash small, fixed-length sequences:\cite{umac}
\[
\sum_{i=0}^m (d_{2i} + s_{2i})(d_{2i+1} + s_{2i+1})
\]
where $d, s \in \ints_{2^{32}}^{2m+2}$ are the input string and the input entropy, respectively.
The $d_j + s_j$ additions are in the ring $\ints_{2^{32}}$, while all other operations are in the ring $\ints_{2^{64}}$.
NH is $2^{-32}$-AU.
In fact, it satisfies a stronger property, $2^{-32}$-A$\Delta$U:\cite{umac}

\begin{definition}
  A hash family $H$ is said to be {\em $\varepsilon$-almost $\Delta$-universal} (or just A$\Delta$U) when
  \[
  \forall x,y,\delta, \Pr_s[H(s,x) - H(s,y) = \delta] \leq \varepsilon \in o(1)
  \]
\end{definition}

In tree nodes (though not in EHC, covered below), a variant of NH is used in which the last input pair is not hashed, thereby increasing performance:
\[
\left(\sum_{i=0}^{m-1} (d_{2i} + s_{2i})(d_{2i+1} + s_{2i+1})\right) + d_{2m} + 2^{32} d_{2m+1}
\]
This hash family is still $2^{-32}$-AU.\cite{badger}

\subsection{Encode, Hash, Combine}

At the leaves of the tree hash, HalftimeHash uses the ``Encode, Hash, Combine'' algorithm.\cite{ehc-nandi}
EHC is parameterized by an erasure code with ``minimum distance'' $k$, which is a map on sequences of words such that any two input values that differ in {\em any} location produce encoded outputs that differ in {\em at least $k > 1$} locations after encoding.

The EHC algorithm is:
\begin{enumerate}
\item A sequence of words is processed by an erasure code with minimum distance $k$, producing a longer encoded sequence.
\item Each word in the encoded sequence is hashed using an A$\Delta$U family with independently and randomly chosen input entropy.
\item A linear transformation $T$ is applied to the resulting sequence of hash values.
  The codomain of $T$ has dimension $k$, and $T$ must have the property that any $k$ columns of it are linearly independent.
\end{enumerate}

Nandi proved that if the EHC matrix product is over a finite field, EHC is $\varepsilon^k$-AU.
This AU collision probability could be achieved on the same input by instead running $k$ copies of NH, but that would perform $m k$ multiplications to hash $m$ words, while EHC requires $m + k$ multiplications, excluding the multiplications implicit in applying $T$.
That exclusion is the topic of Section~\ref{gehc}.

\section{Generalized EHC}
\label{gehc}

At first glance, EHC might not look like it will reduce the number of multiplications needed, as the application of linear transformations usually requires multiplication.
However, since $T$ is not part of the randomness of the hash family, it can be designed to contain only values that are trivial to multiply by, such as powers of $2$.

The constraint in \cite{ehc-nandi} requires that any $k$ columns of $T$ form an invertible matrix.
This is not feasible in linear transformations on $\ints_{2^{64}}$ in most useful dimensions.
For instance, in HalftimeHash24, a $3 \times 9$ matrix $T$ is used.
Any such matrix will have at least one set of three columns with an even determinant, and which therefore has a non-trivial kernel.

\begin{proof}
  Let $U$ be a matrix over $\ints_2$ formed by reducing each entry of $T$ modulo $2$.
  Then $(\det T) \bmod 2 \equiv \det U$.
  Since there are only 7 unique non-zero columns of size 3 over $\ints_2$, by the pigeonhole principle, some two columns $x, y$ of $U$ must be equal.
  Any set of columns that includes both $x$ and $y$ has a determinant of $0 \bmod 2$.
  \qed
\end{proof}

%Furthermore, the input to each section is independent? Partially independent, based on the loss at the matrix multiplication step? Since each partition

Let $k$ be the minimum distance of the erasure code.
While Nandi proved that EHC is $\varepsilon^k$-AU over a finite field, $\ints_{2^{64}}$ is not a finite field.
However, there are similarities to a finite field, in that there are some elements in $\ints_{2^{64}}$ with inverses.
Some other elements in $\ints_{2^{64}}$ are zero divisors, but only have one value that they can be multiplied by to produce 0.
A variant of Nandi's proof is presented here as a warm-up to explain the similarities. \cite{ehc-nandi}

\begin{lemma}
  When the matrix product is taken over a field, if the hash function $H$ used in step 2 is $\varepsilon$-A$\Delta$U, EHC is $\varepsilon^k$-A$\Delta$U.
\end{lemma}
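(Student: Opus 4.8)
The plan is to follow Nandi's original argument, translating it into the $\Delta$-universal setting. Fix two distinct inputs $x \neq y$ and a target difference $\delta$ in the codomain of $T$ (a $k$-dimensional vector space over the field). After the erasure code is applied, the encoded sequences $E(x)$ and $E(y)$ differ in at least $k$ coordinates; let $I$ be a set of exactly $k$ such coordinates. For each coordinate $j$, write $h_j$ for the hash value of the $j$th encoded word under the independently-seeded A$\Delta$U family, and let $v$ be the vector of these hash values, so the EHC output is $Tv$. The event we must bound is $T v_x - T v_y = \delta$, i.e.\ $T(v_x - v_y) = \delta$.

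First I would condition on the hash values at all coordinates \emph{outside} $I$. Once those are fixed, $T(v_x - v_y) = \delta$ becomes an affine condition on the $k$-dimensional subvector $w$ indexed by $I$, namely $T_I w = \delta'$ for the submatrix $T_I$ of the $k$ relevant columns and an adjusted target $\delta'$ (absorbing the fixed contributions). Here is where the hypothesis on $T$ enters: any $k$ columns of $T$ are linearly independent, so over a field $T_I$ is invertible, and $T_I w = \delta'$ has the unique solution $w = T_I^{-1}\delta'$. So the collision event forces each of the $k$ differences $(v_x - v_y)_j$, for $j \in I$, to equal a \emph{specific} prescribed value.

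Next I would use independence of the $k$ seeds together with the A$\Delta$U property coordinatewise. For each $j \in I$, the encoded words $E(x)_j$ and $E(y)_j$ differ (by choice of $I$), so by the A$\Delta$U property of $H$ the probability that $H(s_j, E(x)_j) - H(s_j, E(y)_j)$ hits its one prescribed value is at most $\varepsilon$. Since the seeds $s_j$ are independent, the joint probability that all $k$ of these events occur is at most $\varepsilon^k$. Averaging over the conditioning from the first step preserves the bound, giving $\Pr[T v_x - T v_y = \delta] \leq \varepsilon^k$, which is exactly $\varepsilon^k$-A$\Delta$U.

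I expect the main subtlety — rather than a deep obstacle — to be the bookkeeping around the conditioning: making sure that fixing the out-of-$I$ coordinates genuinely reduces the event to a unique prescribed difference on each in-$I$ coordinate, and that the adjustment $\delta \mapsto \delta'$ is a deterministic function of the conditioned values (so no probability leaks). One should also be careful that $I$ can be chosen as a function of $x$ and $y$ alone (it depends only on where $E(x)$ and $E(y)$ differ, which the erasure-code minimum-distance guarantee controls), independent of the seeds, so that the per-coordinate A$\Delta$U applications are legitimate. The invertibility of every $k$-column submatrix is used in exactly one place and is precisely the property that fails over $\ints_{2^{64}}$ — which is what motivates the more delicate argument promised for the general case.
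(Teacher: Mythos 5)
Your proposal is correct and follows essentially the same route as the paper's proof: restrict to a set of $k$ coordinates where the encodings differ, condition on the remaining coordinates, invert the corresponding $k$-column submatrix of $T$ over the field to pin down a unique prescribed difference per coordinate, and multiply the $k$ independent A$\Delta$U bounds. Your handling of the adjusted target $\delta'$ under conditioning is slightly more explicit than the paper's, but the argument is the same.
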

\begin{proof}
  Let $\bar{H}$ be defined as $\bar{H}(s, x)_i \defeq H(s_i, x_i)$.
  Let $J$ be the encoding function that acts on $x$ and $y$, producing an encoding of length $e$.
  Given that $x$ and $y$ differ, let $F$ be $k$ locations where $J(x)_i \neq J(y)_i$.
  Let $T|_F$ be the matrix formed by the columns of $T$ where the column index is in $F$ and let $\bar{H}|_F$ similarly be $\bar{H}$ restricted to the indices in $F$.
  Conditioning over the $e - k$ indices not in $F$, we want to bound
  \begin{equation}
    \label{ehc-delta}
    \Pr_s[T|_F \bar{H}|_F(s, J(x)) - T|_F \bar{H}|_F(s, J(y)) = \delta]
  \end{equation}
  Since any $k$ columns of $T$ are independent, $T|_F$ is non-singular, and the equation is equivalent to $\bar{H}|_F(s, J(x)) - \bar{H}|_F(s, J(y)) = {T|_F}^{-1} \delta$, which implies
  \[
  \bigwedge_{i \in F} H(s_i, J(x)_i) - H(s_i, J(y)_i) = \beta_i
  \]
  where $\beta \defeq {T|_F}^{-1} \delta$.

  Since the $s_i$ are all chosen independently, the probability of the conjunction is the product of the probabilities, showing
  \[
  \begin{array}{rl}
    &  \Pr_s[T|_F \bar{H}|_F (s,J(x)) - T|_F \bar{H}|_F(s,J(y)) = \delta] \\
  \leq &  \prod_{i \in F} Pr_s[H(s_i, J(x)_i) - H(s_i, J(y)_i) = \beta_i]
  \end{array}
  \]
  and since $H$ is A$\Delta$U, this probability is $\varepsilon^k$.  \qed
\end{proof}

Note that this lemma depends on $k$ being the minimum distance of the code.
If the distance were less than $k$, then the matrix would be smaller, increasing the probability of collisions.

In the non-field ring $\ints_{2^{64}}$, the situation is altered.
``Good'' matrices are those in which the determinant of any $k$ columns is divisible only by a small power of two.
The intuition is that, since matrices in $\ints_{2^{64}}$ with odd determinants are invertible, the ``closer'' a determinant is to odd (meaning it is not divisible by large powers of two), the ``closer'' it is to invertible.

\begin{theorem}
  Let $p$ be the largest power of 2 that divides the determinant of any $k$ columns in $T$.
  The EHC step of HalftimeHash is $2^{k(p-32)}$-A$\Delta$U when using NH as the hash family.
\end{theorem}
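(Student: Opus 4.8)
The plan is to adapt the field-case lemma, tracking divisibility by $2$ in place of invertibility. As before, let $\bar H(s,x)_i \defeq H(s_i,x_i)$ with $H = \mathrm{NH}$, let $J$ be the erasure code, and given $x \neq y$ fix a set $F$ of exactly $k$ coordinates on which $J(x)$ and $J(y)$ differ. Conditioning on the seed components outside $F$, I want to bound $\Pr_s[T|_F \bar H|_F(s,J(x)) - T|_F \bar H|_F(s,J(y)) = \delta]$. Writing $\Delta_i \defeq H(s_i, J(x)_i) - H(s_i, J(y)_i)$ for $i \in F$, the event is $T|_F \Delta = \delta$, a linear system over $\ints_{2^{64}}$ whose coefficient matrix $M \defeq T|_F$ has $\det M$ divisible by exactly $2^q$ for some $q \leq p$.

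The core step is a structural fact about solving $M\Delta = \delta$ over $\ints_{2^{64}}$ when $2^q \,\|\, \det M$: I would use the adjugate identity $(\adj M)\, M = (\det M) I$, so $M\Delta = \delta$ forces $(\det M)\Delta = (\adj M)\delta$, i.e.\ $2^q \beta_i' = (\adj M \cdot \delta)_i$ in $\ints_{2^{64}}$ after factoring the odd part of $\det M$ (which is a unit) — hence each coordinate $\Delta_i$ is constrained to lie in a coset of the annihilator of $2^q$, a set of size exactly $2^q$ in $\ints_{2^{64}}$ (equivalently, $\Delta_i$ is determined modulo $2^{64-q}$). Therefore the event $M\Delta = \delta$ is contained in an event of the form $\bigwedge_{i \in F}\big(\Delta_i \in S_i\big)$ where each $S_i \subseteq \ints_{2^{64}}$ has $|S_i| \leq 2^{q} \leq 2^{p}$. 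Since the seed components $s_i$ for $i \in F$ are independent and $H=\mathrm{NH}$ is $2^{-32}$-A$\Delta$U, for each fixed value $v$ we have $\Pr_{s_i}[\Delta_i = v] \leq 2^{-32}$; summing over the at most $2^{p}$ admissible values gives $\Pr_{s_i}[\Delta_i \in S_i] \leq 2^{p-32}$. Multiplying over the $k$ independent coordinates in $F$ yields the bound $2^{k(p-32)}$.

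The delicate point — and the step I expect to be the main obstacle — is the claim that $\det M$ being divisible by exactly $2^q$ pins each $\Delta_i$ down to at most $2^q$ residues, \emph{and} that this is simultaneously true across all $i \in F$ with genuinely independent randomness so that the probabilities multiply cleanly. One has to be careful that the adjugate argument only shows $2^q \Delta_i$ is fixed, which bounds $\Delta_i$ modulo $2^{64-q}$ but does not by itself exploit the full structure of $M$; a cleaner route is to reduce $M$ over $\ints_{2^{64}}$ to Smith normal form $\mathrm{diag}(2^{a_1},\dots,2^{a_k})$ (up to unit row/column operations), where $\sum a_i = q \leq p$, so that after an invertible change of variables each transformed coordinate of $\Delta$ is constrained to $2^{a_i}$ values; the change of variables is over the \emph{fixed} matrix $T$, not the random seed, so it commutes with the conditioning and the product $\prod 2^{a_i} = 2^{q} \leq 2^{p}$ controls the count. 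I would also note, as in the field case, that the argument genuinely needs $|F| = k$ to equal the minimum distance: a smaller $F$ would force a smaller (possibly more degenerate) minor of $T$, and the exponent $k$ in $2^{k(p-32)}$ would shrink accordingly.
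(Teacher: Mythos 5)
Your main argument is essentially the paper's own proof: it uses the adjugate identity to pass from $T|_F\Delta=\delta$ to $(\det T|_F)\Delta=\adj(T|_F)\delta$, strips off the odd unit factor of the determinant, concludes each $\Delta_i$ is pinned down modulo $2^{64-q}$ (hence to $2^q\le 2^p$ values), and multiplies the per-coordinate bounds $2^{q}\cdot 2^{-32}$ over the $k$ independent seeds. The Smith-normal-form aside is not needed for the stated bound (and would itself require care, since an invertible change of variables mixes the coordinates of $\Delta$ and breaks their independence), but the adjugate route you lead with is exactly what the paper does and is sufficient.
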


\begin{proof}
  In HalftimeHash, the proof of the lemma above unravels at the reliance upon the trivial kernel of $T|_F$.
  The columns of $T$ in HalftimeHash are linearly independent, so the matrix $T|_F$ is injective in rings without zero dividers, but not necessarily injective in $\ints_{2^{64}}$.

  However, even in $\ints_{2^{64}}$, the adjugate matrix $\adj(A)$ has the property that $A \cdot \adj(A) = \adj(A) \cdot A = \det(A) I$.
  Let $\det(T|_F) = q2^{p'}$, where $q$ is odd and $p' \le p$.
  Now (\ref{ehc-delta}) reduces to
  \[
  \begin{array}{rl}
    &   \Pr_s[T|_F \bar{H}|_F(s,x) - T|_F \bar{H}|_F(s,y) = \delta]\\
    \leq &  \Pr_s[\adj(T|_F) T|_F \bar{H}|_F(s,x) - \adj(T|_F) T|_F \bar{H}|_F(s,y) = \adj(T|_F) \delta] \\
    = &  \Pr_s[q2^{p'}\bar{H}|_F(s,x) - q2^{p'}\bar{H}|_F(s,y) = \adj(T|_F) \delta] \\
    = &  \Pr_s[2^{p'}\bar{H}|_F(s,x) - 2^{p'}\bar{H}|_F(s,y) = q^{-1} \adj(T|_F) \delta]
  \end{array}
  \]

  Now letting $\beta = q^{-1} \adj(T|_F) \delta$ and letting the modulo operator extend pointwise to vectors, we have

  \[
  \begin{array}{rl}
    = &  \Pr_s[\bar{H}|_F(s,x) - \bar{H}|_F(s,y) \equiv \beta \bmod 2^{64-p'}] \\
    = &  \Pr_s\left[\bigwedge_{i \in F} H(s_i,x_i) - H(s_i,y_i) \equiv \beta_i \bmod 2^{64-p'}\right] \\
    = &  \prod_{i \in F} \Pr_s\left[ H(s_i,x_i) - H(s_i,y_i) \equiv \beta_i \bmod 2^{64-p'}\right] \\
    = & \left(2^{p'} 2^{-32}\right)^{|F|} = 2^{k(p'-32)}
  \end{array}
  \]
  This quantity is highest when $p'$ is at its maximum over all potential sets of columns $F$, and $p'$ is at most $p$, by the definition of $p$.
  \qed
  % TODO: this can be reduced by shuffling the column order
\end{proof}

This generalized version of EHC is used in the implementation of HalftimeHash described in Section~\ref{implementation}, with $p \le 2^3$.

\section{Implementation}
\label{implementation}

This section describes the specific implementation choices made in HalftimeHash to ensure high output entropy and high performance.
The algorithm performs the following steps:

\begin{itemize}
\item Generalized EHC on instances of the unhashed input, producing 2, 3, 4, or 5 output words (of 64 bits each) per input instance
\item 2, 3, 4, or 5 exexutions of tree hash (with independently and randomly chosen input entropy) on the output of EHC, with NH at each internal node, producing a sequence of words logarithmic in the length of the input string, as described below in Equation~\ref{stack-construction}
\item NH on the output of each tree hash, producing 16, 24, 32, or 40 bytes
\end{itemize}

\subsection{EHC}

In addition to the trivial distance-2 erasure code of XOR'ing the words together and appending that as an additional word, HalftimeHash uses non-linear erasure codes discovered by Gab\-ri\-el\-yan with minimum distance 3, 4, or 5. \cite{9-7-erasure-code,10-7-erasure-code,9-5-erasure-code}

For the linear transformations, HalftimeHash uses matrices $T$ selected so that the largest power of 2 that divides any determinant is $2^2$ or $2^3$.
For instance, for the HalftimeHash24 variant, $T$ has a $p$ of $2^2$:

\begin{displaymath}
  \left(
\begin{array}{rrrrrrrrr}
  0 & 0 & 1 & 4 & 1 & 1 & 2 & 2 & 1\\
  1 & 1 & 0 & 0 & 1 & 4 & 1 & 2 & 2\\
  1 & 4 & 1 & 1 & 0 & 0 & 2 & 1 & 2
\end{array}
\right)
\end{displaymath}

For other output widths, HalftimeHash uses

\[
\begin{tabular}{|r|c|c|c|}
  \hline  & HalftimeHash16 & HalftimeHash32 & HalftimeHash40 \\
  \hline $T$ &
$\left(
\begin{array}{rrrrrrrrrrrr}
  1 & 0 & 1 & 1 & 2 & 1 & 4\\
  0 & 1 & 1 & 2 & 1 & 4 & 1
\end{array}
\right)$
&
$\left(
\begin{array}{rrrrrrrrrr}
 0 & 0 & 0 & 1 & 1 & 4 & 2 & 4 & 1 & 1 \\
 0 & 1 & 2 & 0 & 0 & 1 & 1 & 2 & 4 & 1 \\
 2 & 0 & 1 & 0 & 4 & 0 & 1 & 1 & 1 & 1 \\
 1 & 1 & 0 & 1 & 0 & 0 & 4 & 1 & 2 & 8
\end{array}
\right)$
&
$\left(
\begin{array}{rrrrrrrrr}
 1 & 0 & 0 & 0 & 0 & 1 & 1 & 2 & 4\\
 0 & 1 & 0 & 0 & 0 & 1 & 2 & 1 & 7\\
 0 & 0 & 1 & 0 & 0 & 1 & 3 & 8 & 5\\
 0 & 0 & 0 & 1 & 0 & 1 & 4 & 9 & 8\\
 0 & 0 & 0 & 0 & 1 & 1 & 5 & 3 & 9
\end{array}
\right)$ \\
\hline $p$ & $2^2$ & $2^3$ & $2^3$ \\
\hline
\end{tabular}
\]

The input group lengths for the EHC input are 6, 7, 7, and 5, as can be seen from the dimensions of the matrices: $\text{columns} + 1 - \text{rows}$.
Note that each of these matrices contains coefficients that can be multiplied by with no more than two shifts and one addition.

\subsection{Tree hash}

For the tree hashing at internal nodes (above the leaf nodes, which use EHC), $k \in \{2, 3, 4, 5\}$ tree hashes are executed with independently-chosen input entropy, producing output entropy of $-k \lg \varepsilon$.
From the result from Carter and Wegman on the entropy of tree hash of a tree of height $m$, the resulting hash function is $m\varepsilon^k$-AU.

The key lemma they need is that almost universality is composable:

\begin{lemma}[Carter and Wegman]
  If $F$ is $\varepsilon_F$-AU, $G$ is $\varepsilon_G$-AU, then
  \begin{itemize}
  \item $F \circ G$ where $F \circ G (\langle k_F, k_G \rangle, x) \defeq F(k_F,G(k_G, x))$ is $(\varepsilon_F + \varepsilon_G)$-AU.
  \item $\langle F, G\rangle$ where $\langle F, G \rangle(\langle k_F, k_G \rangle, \langle x, y \rangle) \defeq \langle F(k_F,x), G(k_G,y) \rangle$, is $\text{max}(\varepsilon_F, \varepsilon_G)$-AU, even if $F=G$ and $k_F = k_G$.
%  \item $F \circ \langle G,H \rangle$ where $(F \circ \langle G,H \rangle) (\langle k_F, k_G, k_H \rangle , \langle p,q \rangle) \defeq F(k_F, G(k_G,p), H(k_H, q))$ is a family parameterized by the combination of keys for $F$ , $G$, and $H$, even if $G = H$ and $k_G = k_H$.
  \end{itemize}
\end{lemma}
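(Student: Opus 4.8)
The plan is to prove each bullet by a short case analysis on where the two distinct inputs differ, using nothing beyond the union bound, the definition of AU, and --- for the first bullet only --- the independence of the two seed components $k_F$ and $k_G$.

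For $F \circ G$, fix $x \neq y$ in the common domain. A collision $F(k_F, G(k_G,x)) = F(k_F, G(k_G,y))$ occurs in one of two disjoint ways. Either $G$ already collides, i.e.\ $G(k_G,x) = G(k_G,y)$; since $x \neq y$ and $G$ is $\varepsilon_G$-AU, this event (which depends only on $k_G$) has probability at most $\varepsilon_G$. Or $G(k_G,x) \neq G(k_G,y)$ but $F$ collides on this pair of distinct values. To bound the latter I would condition on any fixed $k_G$ for which $u = G(k_G,x)$ and $v = G(k_G,y)$ are distinct; because $k_F$ is drawn independently of $k_G$ and $F$ is $\varepsilon_F$-AU, $\Pr_{k_F}[F(k_F,u) = F(k_F,v)] \le \varepsilon_F$, and averaging over all such $k_G$ keeps the bound at $\varepsilon_F$. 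Adding the two cases gives $\varepsilon_F + \varepsilon_G$, and this is in $o(1)$ since $\varepsilon_F, \varepsilon_G$ are.

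For $\langle F, G\rangle$, fix $\langle x, y\rangle \neq \langle x', y'\rangle$, so at least one coordinate differs; by symmetry assume $x \neq x'$. The pair collides only if $F(k_F, x) = F(k_F, x')$ \emph{and} $G(k_G, y) = G(k_G, y')$, so by the trivial inequality $\Pr[A \wedge B] \le \Pr[A]$ the collision probability is at most $\Pr[F(k_F,x) = F(k_F,x')] \le \varepsilon_F \le \max(\varepsilon_F,\varepsilon_G)$. Since this argument simply discards one of the two conjuncts, it never appeals to independence of $k_F$ and $k_G$, so it applies verbatim when $F = G$ and $k_F = k_G$ --- which is exactly the strengthening claimed.

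The only step that needs care is the conditioning in the first bullet: one must argue that ``$F$ collides on $G(k_G,x)$ and $G(k_G,y)$, given that these are distinct'' is governed by the AU property of $F$ applied to a \emph{fixed} pair of distinct inputs, which is legitimate precisely because $F$'s seed is independent of everything determining those two inputs. Everything else is bookkeeping with the union bound, so I do not expect any real difficulty beyond stating that conditioning argument cleanly.
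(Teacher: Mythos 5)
Your proof is correct. The paper itself gives no proof of this lemma --- it is stated as a citation to Carter and Wegman --- and your argument is the standard one: the two-case decomposition with the union bound and the independence of $k_F$ from $k_G$ for the composition bound, and discarding one conjunct for the concatenation bound (which, as you rightly note, is why no independence is needed there, justifying the ``even if $F=G$ and $k_F=k_G$'' clause). The only point worth stating explicitly in a written-up version is the one you already flag: the conditioning on $k_G$ in the first bullet, where $F$'s AU property is applied to the now-fixed distinct pair $G(k_G,x) \neq G(k_G,y)$.
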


The approach in Badger of using Equation~\ref{algebraic-badger} to handle words that are not in perfect trees can be increased in speed with the following method:
For HalftimeHash, define $\widehat{T}$ as a family taking as input sequences of any length $n$ and producing sequences of length $\lceil \lg n \rceil$ as follows, using Carter and Wegman's $T$ defined in Section~\ref{prior-work}:

\begin{equation}
\label{stack-construction}
\begin{array}{l}
\widehat{T}_0(k, \langle \rangle) \defeq \langle \bot \rangle \\
\widehat{T}_0(k, \langle x \rangle) \defeq \langle x \rangle \\
\widehat{T}_{i+1}(k, s[0,n)) \defeq \left\{
  \begin{array}{rcll}
   \bot &\triangleleft& \widehat{T}_i(k[1,\lceil\lg n\rceil), s[0,n)) & \text{if } 2^i > n \\
    T(k, s[0, 2^i))) &\triangleleft& \widehat{T}_{i}(k[1,\lceil\lg n\rceil),s[2^i, n)) & \text{if } 2^i \le n
  \end{array}
  \right.
\end{array}
\end{equation}

There is one execution of $T$ for every 1 in the binary representation of $n$.
%This makes $\widehat{T}$ produce a sequence with the same number of elements as there are levels in the execution of $T$.
By an induction on $\lceil \lg n \rceil$ using the composition lemma, $\widehat{T}$ is $\varepsilon \lceil \lg n \rceil$-AU.

The output of $\widehat{T}$ is then hashed using an NH instance of size $\lceil \lg n \rceil$.
This differs from Badger, where $T$ is used to fully consume the input without the use of additional input entropy;
$T$ produces a single word per execution, while $\widehat{T}$ needs to be paired with NH post-processing in order to achieve that.\cite{badger}
Empirically, $\widehat{T}$ has better performance than the Badger approach.

\section{Performance}
\label{performance}

This section tests and analyzes HalftimeHash performance, including an analysis of the output entropy.

\subsection{Analysis}
%This section presents metrics of an execution of HalftimeHash, including the collision probability.
%, number of multiplications performed, and the amount of input entropy used.
%This section will treat HalftimeHash as abstract, rather than focusing on a single version with 24 bytes of output, as described above.
The parameters used in this analysis are:

\begin {description}
\item[$b$] the number of 64-bit words in a block.
  Blocks are used to take advantage of SIMD units.
\item[$d$] is the number of elements in each EHC instance before applying the encoding.
\item[$e$] is the number of blocks in EHC after applying the encoding.
\item[$f$] is fanout, the width of the NH instance at tree hash nodes.
\item[$k$] is the number of blocks produced by the Combine step of EHC.
  This is also the minimum distance of the erasure code, as described above.
\item[$p$] is the maximum power of 2 that divides a determinant of any $k \times k$ matrix made from columns of the matrix $T$; doubling $p$ increases $\varepsilon$ by a factor of $2^k$.
\item[$w$] is the number of blocks in each item used in the Encode step of EHC.
\end{description}

In HalftimeHash24, \[(w, d, e, k, b, f, p) = (3, 7, 9, 3, 8, 8, 2^2)\]

Each EHC execution reads in $d w$ blocks, produces $e$ blocks, uses $e w$ words of input entropy, and performs $e w$ multiplications.

For the tree hash portion of HalftimeHash, the height of the $k$ trees drives multiple metrics.
Each tree has $\lfloor n / b d w \rfloor$ blocks as input and every level execution forms a complete $f$-ary execution tree.
The height of the tree is thus $h \defeq \left\lfloor \log_f \lfloor n / b d w \rfloor \right\rfloor$.

\begin{lemma}
The tree hash is $2^{ k\lg h - 32k}$-AU.
\end{lemma}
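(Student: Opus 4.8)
The plan is to split the target exponent: since $2^{k\lg h-32k}=(h\cdot 2^{-32})^{k}$, I would bound a single tree-hash execution by $h\cdot 2^{-32}$ and then raise that to the $k$-th power using independence across the $k$ executions. For the single execution: its internal nodes all run NH (the fanout-$f$ variant), which is $2^{-32}$-AU, in fact $2^{-32}$-A$\Delta$U. An execution on $N\defeq\lfloor n/bdw\rfloor$ leaf blocks is the $f$-ary (B-tree) analogue of Carter and Wegman's binary tree hash, of height $h=\lfloor\log_f N\rfloor$, carried out by the $\widehat{T}$ construction of Equation~\ref{stack-construction} when $N$ is not a power of $f$. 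An induction over the $h$ levels using both clauses of the composition lemma — the $\langle\cdot\rangle$ clause to aggregate the $f$ subtrees feeding a level-$i$ node, then the $F\circ G$ clause to compose that node's NH on top — shows one execution is $h\cdot 2^{-32}$-AU; this is exactly the Carter--Wegman/Badger computation the excerpt already records as ``$\widehat{T}$ is $\varepsilon\lceil\lg n\rceil$-AU'', which I would invoke rather than redo (the floor/ceiling convention for $h$ being immaterial).

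Next, HalftimeHash performs $k$ executions with mutually independent input entropy $r_1,\dots,r_k$. Fixing any pair of distinct equal-length inputs $x\neq y$, the event ``execution $j$ collides on $x,y$'' is determined by $r_j$ alone, so the $k$ events are mutually independent and
\[
\Pr_{r_1,\dots,r_k}\Big[\bigwedge_{j=1}^{k}\text{execution }j\text{ collides}\Big]=\prod_{j=1}^{k}\Pr_{r_j}\big[\text{execution }j\text{ collides}\big]\le(h\,2^{-32})^{k}=2^{k\lg h-32k}.
\]
This is the ``$k$ independent copies on the same data'' companion of the $\langle F,G\rangle$ clause of the composition lemma: that clause only gives a $\max$ because it feeds $F$ and $G$ different inputs, whereas here every factor is the identical bound $h\,2^{-32}$ and the events genuinely multiply.

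The delicate point — and what I expect to be the main obstacle — is the hypothesis hidden in the display: that each of the $k$ executions is actually presented with a differing pair of inputs. Execution $j$ consumes the $j$-th stream of blocks produced by the Combine step of EHC, and if that stream were identical for $x$ and $y$ then execution $j$ ``collides'' with probability $1$, its factor degrades from $h\,2^{-32}$ to $1$, and the product collapses. Guaranteeing that the original difference survives into all $k$ streams is precisely the job of the minimum-distance-$k$ erasure code together with the ``any $k$ columns of $T$ are independent'' property, so making the argument fully rigorous means dovetailing this lemma with the generalized-EHC theorem: condition on the EHC output and separately charge the low-probability event that EHC itself drops the difference from some stream. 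By comparison, the tree-height bookkeeping and the independence calculation above are routine.
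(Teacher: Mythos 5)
Your argument is essentially the paper's own proof: Carter--Wegman gives $h\,2^{-32}$ for a single tree whose nodes use $2^{-32}$-AU NH, and a collision at this stage requires all $k$ trees (with independently chosen input entropy) to collide simultaneously, giving $\left(h\,2^{-32}\right)^{k}=2^{k\lg h-32k}$. The ``delicate point'' you flag — that each of the $k$ executions must actually be fed a differing pair of streams — is exactly what the paper disposes of with its closing clause ``assuming that the EHC step didn't already induce a collision,'' so your treatment is, if anything, slightly more explicit about the conditioning than the original.
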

\begin{proof}
  Carter and Wegman showed that tree hash has collision probability of $h \varepsilon$, where $\varepsilon$ is the collision probability of a single node.
  Each tree node uses NH, so a single tree has collision probability $h 2^{-32}$.
  A collision occurs for HalftimeHash at the tree hash stage if and only if all $k$ trees collide, which has probability $\left(h 2^{-32}\right)^k$, assuming that the EHC step didn't already induce a collision. \qed
\end{proof}

The amount of input entropy needed is proportional to the height of the tree, with $f - 1$ words needed for every level.
HalftimeHash uses different input entropy for the $k$ different trees, so the total number of 64-bit words of input entropy used in the tree hash step is $(f - 1) h k$.

The number of multiplications performed is identical to the number of input words, $k b \lfloor n / b d w \rfloor$.

%% \begin{tabular}{|r|c|c|}
%%   \hline & {\bf EHC} & {\bf Tree hash}\\
%%   \hline {\bf Multiplications (each node)} & $b e w$ & $b (f-1)$ \\
%%   \hline {\bf Multiplications (total)} & $b e w \lfloor n / b d w\rfloor$ & $k b \lfloor n / b d w \rfloor$ \\
%%   \hline {\bf In Entropy (each tree $\times$ level)} & N/A & $f-1$ \\
%%   \hline {\bf In Entropy (total)} & $e w$ & $k (f-1) \left\lfloor \log_f \lfloor n / b d w \rfloor \right\rfloor$ \\
%%   \hline {\bf Out Entropy (total)} & $k (32-p)$ & $32k - k\lg\left\lfloor\log_f \lfloor n/b d w\rfloor\right\rfloor$\\
%%   \hline {\bf Output words (total)} & $k b \lfloor n / b d w\rfloor $ & $b f k \left\lfloor \log_f \lfloor n / b d w \rfloor \right\rfloor$\\
%%   \hline
%% \end{tabular}

The result of the tree hash is processed through NH, which uses $b f h k$ words of entropy and just as many multiplications.

There can also be as much as $b d w$ words of data in the raw input that are not read by HalftimeHash, as they are less than the input size of one instance of EHC.
Again, NH is used on this data, but now hashing $k$ times, since this data has not gone through EHC.
That requires $b d w k$ words of entropy and just as many multiplications.

For this previously-unread data, the number of words of entropy needed can be reduced by nearly a factor of $k$ using the Toeplitz construction.
Let $r$ be the sequence of random words used to hash it.
Instead of using $r[i b d w, (i+1)b d w)$ as the keys to hash component $i$ with, HalftimeHash uses $r[i, b d w + i)$.
This construction for multi-part hash output is A$\Delta$U. \cite{ehc-nandi,woelfel-toeplitz}

\subsection{Cumulative analysis}

The combined collision probability is
$2^{-32k}\left(2^{kp} + h^k + 1\right)$.
For HalftimeHash24, and for strings less than an exabyte in length, this is more than 83 bits of entropy.

The combined input entropy needed (in words) is
$
e w
+ (f-1) h k
+ b f h k
+ b d w + k - 1
$
HalftimeHash24 requires 8.4KB input entropy for strings of length up to one megabyte and 34KB entropy for strings of length up to one exabyte.

The number of multiplications is dominated by the EHC step, since the total is $(e w + k) b \lfloor n / b d w \rfloor + O(\log n)$ and $e w$ is significantly larger than $k$.
For a string of length 1MB, 84\% of the multiplications happen in the EHC step. %, and the number of multiplications is about one per ten bytes of input.
Intel's VTune tool show the same thing: 86\% of the clock cycles are spent in the EHC step.
Similarly, clhash and UMASH, which are based on 64-bit carryless NH, have their execution times dominated by the multiplications in their base step.~\cite{clhash,umash}

\subsection{Benchmarks}
\label{benchmarks}
%% This section covers performance testing for HalftimeHash, especially compared to clhash and UMASH, the two fastest AU families on long strings.
%% Each of those are based on NH over $\ints_2[x]$, rather than $\ints_{2^{64}}$.

HalftimeHash passes all correctness and randomness tests in the SMHasher test suite; for a performance comparison, see Figure \ref{smhasher-speed} and \cite{smhasher}.

\begin{figure}
  \includegraphics[width=\textwidth]{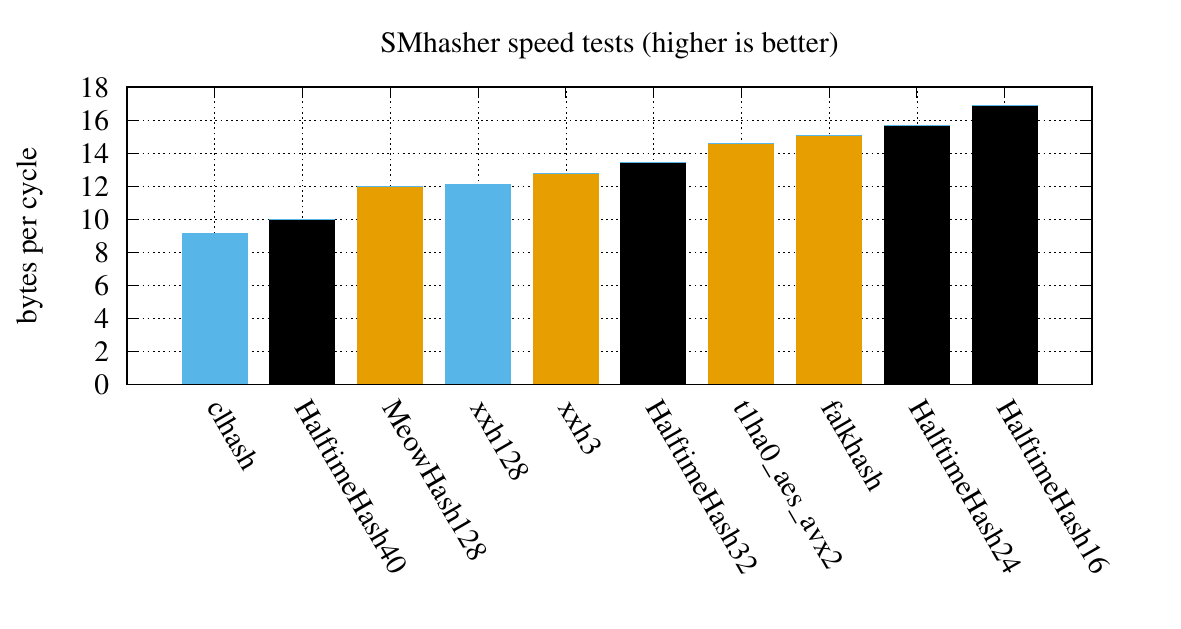}
\caption{
  \label{smhasher-speed}
    The two fastest variants of HalftimeHash are faster than all hash families in the SMHasher suite on 256KiB strings on an i7-7800x, even families that come with no AU guarantees. \protect\cite{smhasher}
    Of the families here, only HalftimeHash and xxh128 pass all SMHasher tests, and only HalftimeHash and clhash are AU.
}
\end{figure}

Figure~\ref{frontier} displays the relationship between output entropy and throughput for HalftimeHash, UMASH, and clhash.\footnote{UMASH and clhash are the fastest AU families for string hashing}
Adding more output entropy increases the number of non-linear arithmetic operations that any hash function has to perform.\cite{ehc-nandi}
%Nandi showed that this is true in the general case, as there is a matching upper and lower bound for the number of non-linear operations to be performed for a certain hash output width.
The avoidance of doubling the number of multiplications for twice the output size is one of the primary reasons that HalftimeHash24, -32, and -40 are faster than running clhash or UMASH with 128-bit output.
(The other is that carryless multiplication is not supported as a SIMD instruction.)

\begin{figure}
\includegraphics[width=\textwidth]{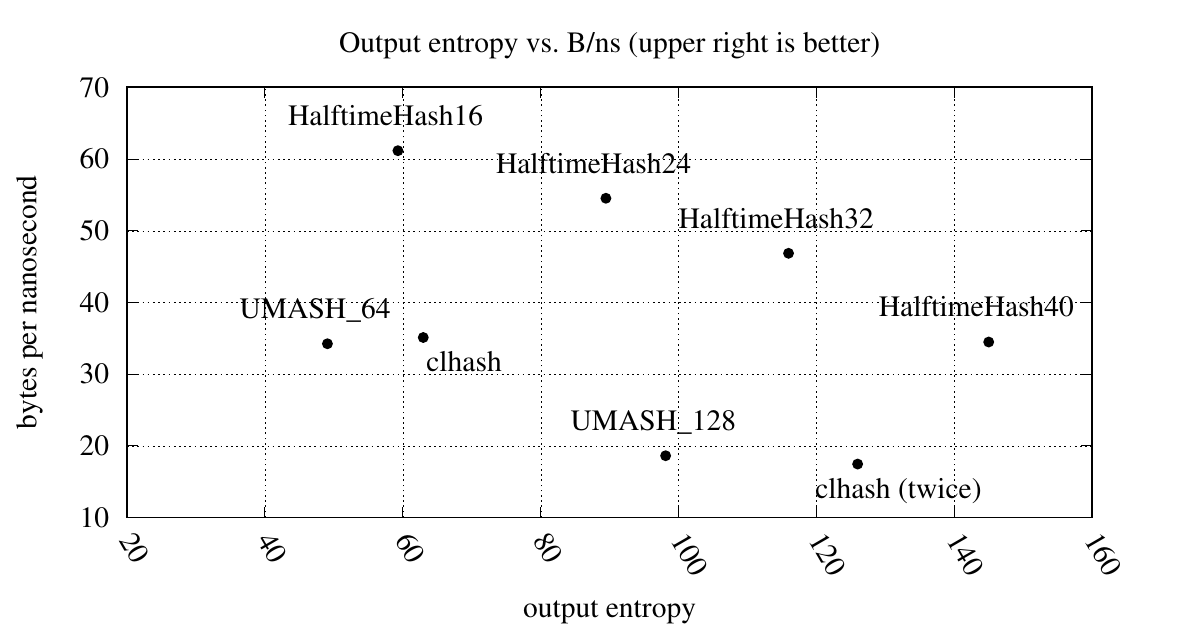}
\caption{
  \label{frontier}
  Trade-offs for almost-universal string hashing functions on strings of size 250KB on an i7-7800x.
  UMASH comes in two variants based on the output width in bits; clhash doesn't, but running clhash twice is included in the chart.
  For each clhash / UMASH version, at least one version of HalftimeHash is faster and has lower collision probability. \protect\cite{layer-of-maxima}
}
\end{figure}

Figure \ref{vs-cl} adds comparisons between clhash, UMASH, and HalftimeHash across input sizes and processor manufacturers.
Although these two machines support different ISA vector extensions, the pattern is similar: for large enough input, HalftimeHash's throughput exceeds that of the carryless multiplication families.
\begin{figure*}
\begin{tabular}{cc}
\includegraphics[width=6.0cm]{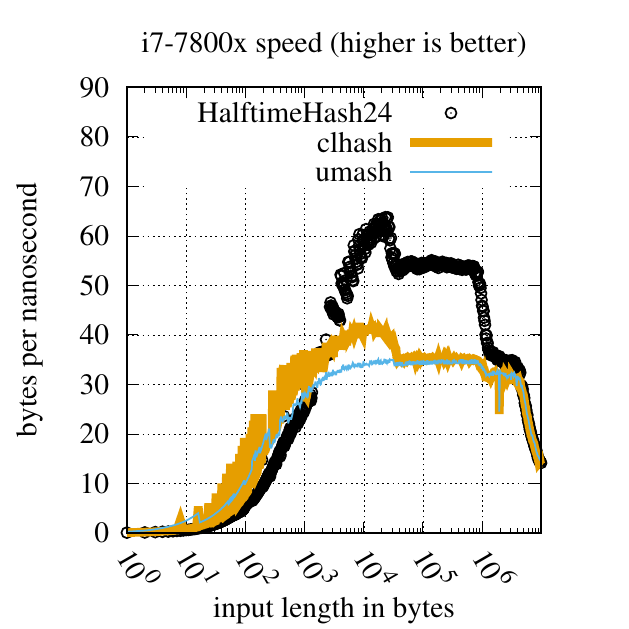}
&
\includegraphics[width=6.0cm]{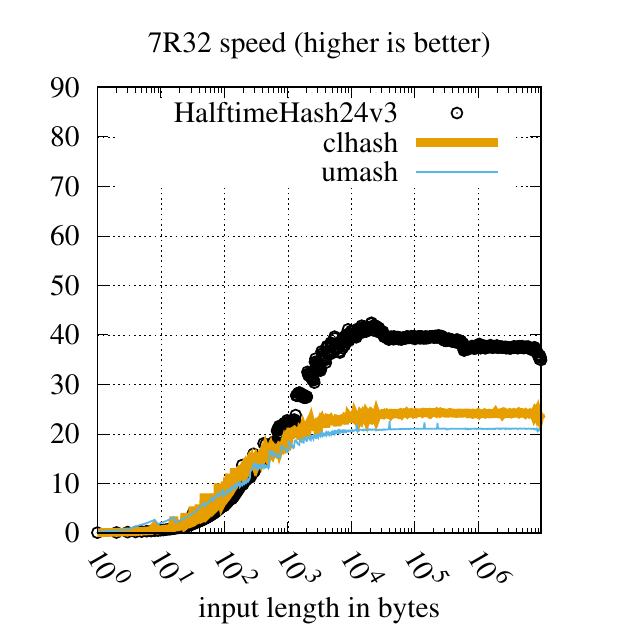}
\end{tabular}
\caption{
  \label{vs-cl}
  Comparison of Intel (i7-7800x) and AMD (EC2 c5a.large, 7R32) performance.
  On both chips HalftimeHash24 is faster than clhash and UMASH for long strings.
  The ``v3'' after the name of the AMD HalftimeHash indicates block size: v3 means a 256-bit block size, while v4 (the default) means 512-bit block size.
  AMD chips do not support AVX-512, but still HalftimeHash with 256-bit blocks exceeds the speed of clmul-based hashing methods by up to a factor of 2.
}
\end{figure*}

\section{Future work}

Areas of future research include:

\begin{itemize}
\item Combining HalftimeHash, which is designed for long input, with a fast family for short input
\item Tuning for JavaScript, which has no native 32-bit integer support
\item Comparisons against hash algorithms in the Linux kernel, including Poly1305 and \texttt{crc32\_pclmul\_le\_16}
\item Benchmarks on POWER and ARM ISA's
\item EHC benchmarks using 64-bit multiplication -- carryless or integral
\end{itemize}

\section*{Acknowledgments}
Thanks to Daniel Lemire, Paul Khuong, and Guy Even for helpful discussions and feedback.

\bibliographystyle{splncs04}
\bibliography{halftime-hash}

\end{document}